\newtheorem{remark}{Remark}
\newtheorem{definition}{Definition}
\newtheorem{theorem}{Theorem}
\begin{document}
%
\title{Robust Student's t based Stochastic Cubature Filter for Nonlinear Systems with Heavy-tailed Process and Measurement Noises}

%
%
%

\author{Yulong~Huang,
        Yonggang~Zhang,~\IEEEmembership{Senior Member,~IEEE,}

\thanks{This work was supported by the National Natural Science Foundation of China under Grant Nos. 61371173 and 61633008 and the Natural Science Foundation of Heilongjiang Province Grant No. F2016008. Corresponding author is Y. G. Zhang.}

\thanks{Y. L. Huang and Y. G. Zhang are with the Department of Automation, Harbin Engineering University, Harbin 150001, China (e-mail: heuedu@163.com; zhangyg@hrbeu.edu.cn).}}

\maketitle

\begin{abstract}
In this paper, a new robust Student's t based stochastic cubature filter (RSTSCF) is proposed for nonlinear state-space model with heavy-tailed process and measurement noises. The heart of the RSTSCF is a stochastic Student's t spherical radial cubature rule (SSTSRCR), which is derived based on the third-degree unbiased spherical rule and the proposed third-degree unbiased radial rule. The existing stochastic integration rule is a special case of the proposed SSTSRCR when the degrees of freedom parameter tends to infinity. The proposed filter is applied to a manoeuvring bearings-only tracking example, where an agile target is tracked and the bearing is observed in clutter. Simulation results show that the proposed RSTSCF can achieve higher estimation accuracy than the existing Gaussian approximate filter, Gaussian sum filter, Huber-based nonlinear Kalman filter, maximum correntropy criterion based Kalman filter, and robust Student's t based nonlinear filters, and is computationally much more efficient than the existing particle filter.
\end{abstract}

\begin{IEEEkeywords}
Nonlinear filter, heavy-tailed noise, Student's t distribution, Student's t weighted integral, outlier, nonlinear system
\end{IEEEkeywords}

%
\IEEEpeerreviewmaketitle

\section{Introduction}
\IEEEPARstart{N}{onlinear} filtering has been playing an important role in many applications, such as target tracking, detection, signal processing, communication and navigation. Under the Bayesian estimation framework, the nonlinear filtering problem is addressed by calculating the posterior probability density function (PDF) recursively based on the nonlinear state-space model. Unfortunately, there is not a closed form solution for posterior PDF for nonlinear state-space model since a closed PDF for nonlinear mapping doesn't exist \cite{Anderson79}. As a result, there is not an optimal solution for nonlinear filtering problem, and an approximate approach is necessary to obtain a suboptimal solution. In general, the posterior PDF is approximated as Gaussian by assuming the jointly predicted PDF of the state and measurement vectors is Gaussian, and the resultant Gaussian approximate (GA) filter can provide tradeoffs between estimation accuracy and computational complexity \cite{CGAF 2015,Arasaratnam 2009}. Up to present, many variants of the GA filter have been proposed using different Gaussian weighted integral rules \cite{Arasaratnam 2009}--\cite{ECKF 2015}. However, in some engineering applications, such as tracking an agile target that is observed in clutter, the heavy-tailed process noise may be induced by severe manoeuvering and the heavy-tailed measurement noise may be induced by measurement outliers from unreliable sensors \cite{student's t filter 2013,Robust GA smoother 2016,TAES 2017 RSTKF}. The performance of the GA filters may degrade for such engineering applications with heavy-tailed noises since they all model the process and measurement noises as Gaussian distributions so that they are sensitive to heavy-tailed non-Gaussian noises \cite{Robust GA smoother 2016}.

Particle filter (PF) is a common method to address non-Gaussian noises, in which the posterior PDF is approximated as a set of random samples with associated weights based on sequential Monte Carlo sampling technique \cite{Arulampalam 2002}. The PF can model the process and measurement noises as arbitrary distributions, such as the Student's t distributions for heavy-tailed non-Gaussian noises \cite{STPF 1993,STPF 2006}. However, the PF suffers from substantial computational complexities in high-dimensional problems because the number of particles increases exponentially with the dimensionality of the state \cite{visual tracking 2008}. Gaussian sum filter (GSF) is an alternative method to handle heavy-tailed non-Gaussian noises, where the heavy-tailed process and measurement noises are modelled as a finite sum of Gaussian distributions, and the posterior distribution is then approximated as a weighted sum of Gaussian distributions by running a bank of GA filters \cite{GSF 1972}--\cite{GSF 2010}. However, for the GSF, it is very difficult to model the heavy-tailed process and measurement noises accurately using finite Gaussian distributions since the heavy-tailed non-Gaussian noises are induced by the unknown manoeuvering or outliers, which may degrade the estimation performance of the GSF.

To solve the filtering problem of nonlinear state-space model with heavy-tailed non-Gaussian noises, the Huber-based nonlinear Kalman filter (HNKF) has been proposed by minimising a Huber cost function that is a combined $l_{1}$ and $l_{2}$ norm \cite{Nonlinear Huber estimator 2010}. A larger number of variants of the HNKF have been derived based on a linearized or statistical linearized method, such as the Huber-based extended Kalman filter \cite{Huber estimator 1995}, the Huber-based divided difference filter \cite{Huber estimator 2007}, the Huber-based unscented Kalman filter \cite{Nonlinear Huber estimator 2012}, the nonlinear regression Huber Kalman filter \cite{Nonlinear regression HKF 2015} and the adaptively robust unscented Kalman filter (ARUKF) \cite{ARUKF 2014}. However, the influence function of the HNKFs don't redescend, which may deteriorate the estimation performance of the HNKFs \cite{TAES 2017 RSTKF}. The maximum correntropy criterion based Kalman filter (MCCKF) has been proposed by maximising the correntropy of the predicted error and residual \cite{MCCKF 2012b}--\cite{MCCKF 2016}. However, there is a lack of theoretical basis to develop the estimation error covariance matrix of the MCCKF, which may degrade the estimation accuracy \cite{TAES 2017 RSTKF}.

A reasonable approach to improve the estimation performance is utilizing a Student's t distribution to model the heavy-tailed non-Gaussian noise. The Student's t distribution is a generalized Gaussian distribution but has heavier tails than the Gaussian distribution, which makes it more suitable for modelling the heavy-tailed non-Gaussian noise. A general framework of the robust Student's t based nonlinear filter (RSTNF) has been proposed, in which the jointly predicted PDF of the state and measurement vectors is assumed to be Student's t, and the posterior PDF is then approximated as Student's t \cite{TAES 2016}. The heart of the RSTNF is how to calculate the Student's t weighted integral, and the estimation accuracy of the associated RSTNF is determined by the employed numerical integral technique. Many variants of the RSTNF have been derived based on different numerical integral methods, such as the robust Student's t based extended filter (RSTEF) using the first-order linearization \cite{student's t filter 2013}, the robust Student's t based unscented filter (RSTUF) using the unscented transform (UT) \cite{TAES 2016,Fusion 2016 Sarkka}, and the robust Student's t based cubature filter (RSTCF) using the third-degree Student's t spherical radial cubature rule (STSRCR) \cite{Fusion 2016}. However, the existing Student's t integral rules can only capture the third-degree or fifth-degree information of the Taylor series expansion for nonlinear approximation, which may result in limited estimation accuracy. Although the Monte Carlo approach can be used to calculate the Student's t weighted integral, it has low accuracy and slow convergence when the integrand is not approximately constant and the number of random samples is finite \cite{Genz 1998}. Therefore, there is a great demand to develop more accurate numerical integral approach for the Student's t weighted integral to further improve the estimation accuracy of the existing RSTNFs.

In this paper, the Student's t weighted integral is decomposed into the spherical integral and the radial integral based on the spherical-radial transformation. A new stochastic STSRCR (SSTSRCR) is derived based on the third-degree unbiased spherical rule (USR) and the proposed third-degree unbiased radial rule (URR), from which a new robust Student's t based stochastic cubature filter (RSTSCF) is obtained. The existing stochastic integration rule (SIR) \cite{Dunik13} is a special case of the proposed SSTSRCR when the degrees of freedom (dof) parameter tends to infinity. The proposed SSTSRCR can achieve better approximation to the Student's t weighted integral as compared with existing Student's t integral rules. As a result, the proposed RSTSCF has higher estimation accuracy than the existing RSTNFs. The proposed filter and existing filters are applied to a manoeuvring bearings-only tracking example, where an agile target is tracked and the bearing is observed in clutter. Simulation results show that the proposed RSTSCF can achieve higher estimation accuracy than the existing GA filter, GSF, HNKF, MCCKF and RSTNFs, and is computationally much more efficient than the existing PF.

The remainder of this paper is organized as follows. In Section II, a general frame of the RSTNF is reviewed. In Section III, a new SSTSRCR is derived based on the proposed third-degree URR, from which a new RSTSCF is obtained. Also, the relationship between the proposed SSTSRCR and the existing SIR is revealed in Section III. In Section IV, the proposed filter is applied to a manoeuvring bearings-only tracking example and simulation results are given. Concluding remarks are drawn in Section V.

\section{Problem Statement}
Consider the following discrete-time nonlinear stochastic system as represented by the state-space model \cite{TAES 2016}
\begin{equation}
\mathbf{x}_{k}=\mathbf{f}_{k-1}(\mathbf{x}_{k-1})+\mathbf{w}_{k-1} \qquad \textrm{(process equation)}
\end{equation}
\begin{equation}
\mathbf{z}_{k}=\mathbf{h}_{k}(\mathbf{x}_{k})+\mathbf{v}_{k} \qquad\;\; \textrm{(measurement equation)},
\end{equation}
where ${k}$ is the discrete time index, $\mathbf{x}_k\in\mathbb{R}^{n}$ is the state vector, $\mathbf{z}_k\in\mathbb{R}^{m}$ is the measurement vector, and $\mathbf{f}_{k-1}(\cdot)$ and $\mathbf{h}_{k}(\cdot)$ are known process and measurement functions respectively. $\mathbf{w}_k\in\mathbb{R}^{n}$ and $\mathbf{v}_k\in\mathbb{R}^{m}$ are heavy-tailed process and measurement noise vectors respectively, which are induced by process and measurement outliers, and their distributions are modelled as Student's t distributions, i.e.,
\begin{equation}
p(\mathbf{w}_{k})=\mathrm{St}(\mathbf{w}_{k};\mathbf{0},\mathbf{Q}_{k},\nu_{1})
\end{equation}
\begin{equation}
p(\mathbf{v}_{k})=\mathrm{St}(\mathbf{v}_{k};\mathbf{0},\mathbf{R}_{k},\nu_{2}),
\end{equation}
where $\mathrm{St}(\cdot;\mathbf{\mu},\mathbf{\Sigma},v)$ denotes the Student's t PDF with mean vector $\mathbf{\mu}$, scale matrix $\mathbf{\Sigma}$, and dof parameter $v$, $\mathbf{Q}_{k}$ and $\nu_{1}$ are the scale matrix and dof parameter of process noise respectively, and $\mathbf{R}_{k}$ and $\nu_{2}$ are the scale matrix and dof parameter of measurement noise respectively. The initial state vector $\mathbf{x}_{0}$ is also assumed to have a Student's t distribution with mean vector $\mathbf{\hat{x}}_{0|0}$, scale matrix $\mathbf{P}_{0|0}$, and dof parameter $\nu_{3}$, and $\mathbf{x}_{0}$, $\mathbf{w}_{k}$ and $\mathbf{v}_{k}$ are assumed to be mutually uncorrelated.

To achieve the filtering estimation, a general framework of RSTNF is derived for the nonlinear system formulated in equations (1)-(4), where the jointly predicted PDF of the state and measurement vectors is assumed as Student's t, then the posterior PDF of the state vector can be approximated as Student's t \cite{TAES 2016}. The time update and measurement update of the recursive RSTNF are given as follows: \\
\textbf{Time update}
\begin{eqnarray}
&&\mathbf{\hat{x}}_{k|k-1}=\int_{\mathbb{R}^{n}} \mathbf{f}_{k-1}(\mathbf{x}_{k-1})\mathrm{St}(\mathbf{x}_{k-1};\mathbf{\hat{x}}_{k-1|k-1},\mathbf{P}_{k-1|k-1},\nu_{3}) \nonumber\\
&&d\mathbf{x}_{k-1}
\end{eqnarray}
\begin{eqnarray}
&&\mathbf{P}_{k|k-1}=\frac{\nu_{3}-2}{\nu_{3}}\int_{\mathbb{R}^{n}} \mathbf{f}_{k-1}(\mathbf{x}_{k-1})\mathbf{f}_{k-1}^{T}(\mathbf{x}_{k-1})\mathrm{St}(\mathbf{x}_{k-1}; \nonumber\\
&&\mathbf{\hat{x}}_{k-1|k-1},\mathbf{P}_{k-1|k-1},\nu_{3})d\mathbf{x}_{k-1}-\frac{\nu_{3}-2}{\nu_{3}}\mathbf{\hat{x}}_{k|k-1}\mathbf{\hat{x}}_{k|k-1}^{T} \nonumber\\
&&+\frac{\nu_{1}(\nu_{3}-2)}{(\nu_{1}-2)\nu_{3}}\mathbf{Q}_{k-1},
\end{eqnarray}
where $(\cdot)^{T}$ denotes the transpose operation, $\mathbf{\hat{x}}_{k|k-1}$ and $\mathbf{P}_{k|k-1}$ are respectively the mean vector and scale matrix of the one-step predicted PDF $p(\mathbf{x}_{k}|\mathbf{Z}_{k-1})$, $\mathbf{Z}_{k-1}=\{\mathbf{z}_{j}\}_{j=1}^{k-1}$ is the set of $k-1$ measurement vectors, and $\nu_{3}$ denotes the dof parameter of the filtering PDF. \\
\textbf{Measurement update}
\begin{equation}
\Delta_{k}=\sqrt{(\mathbf{z}_{k}-\mathbf{\hat{z}}_{k|k-1})^{T}(\mathbf{P}_{k|k-1}^{zz})^{-1}(\mathbf{z}_{k}-\mathbf{\hat{z}}_{k|k-1})}
\end{equation}
\begin{equation}
\mathbf{K}_{k}=\mathbf{P}_{k|k-1}^{xz}(\mathbf{P}_{k|k-1}^{zz})^{-1}
\end{equation}
\begin{equation}
\mathbf{\hat{x}}_{k|k}=\mathbf{\hat{x}}_{k|k-1}+\mathbf{K}_{k}(\mathbf{z}_{k}-\mathbf{\hat{z}}_{k|k-1})
\end{equation}
\begin{equation}
\mathbf{P}_{k|k}=\frac{(\nu_{3}-2)(\nu_{3}+\Delta_{k}^{2})}{\nu_{3}(\nu_{3}+m-2)}(\mathbf{P}_{k|k-1}-\mathbf{K}_{k}\mathbf{P}_{k|k-1}^{zz}\mathbf{K}_{k}^{T}),
\end{equation}
where $(\cdot)^{\mathrm{-1}}$ denotes the inverse operation, $\mathbf{\hat{x}}_{k|k}$ and $\mathbf{P}_{k|k}$ are respectively the mean vector and scale matrix of the filtering PDF $p(\mathbf{x}_{k}|\mathbf{Z}_{k})$, $\mathbf{\hat{z}}_{k|k-1}$ and $\mathbf{P}_{k|k-1}^{zz}$ are respectively the mean vector and scale matrix of the likelihood PDF $p(\mathbf{z}_{k}|$ $\mathbf{Z}_{k-1})$, and $\mathbf{P}_{k|k-1}^{xz}$ is the cross scale matrix of state and measurement vectors, which are given by
\begin{eqnarray}
\mathbf{\hat{z}}_{k|k-1}&=&\int_{\mathbb{R}^{n}} \mathbf{h}_{k}(\mathbf{x}_{k})\mathrm{St}(\mathbf{x}_{k};\mathbf{\hat{x}}_{k|k-1},\mathbf{P}_{k|k-1},\nu_{3})d\mathbf{x}_{k}
\end{eqnarray}
\begin{eqnarray}
&&\mathbf{P}_{k|k-1}^{zz}=\frac{\nu_{3}-2}{\nu_{3}}\int_{\mathbb{R}^{n}} \mathbf{h}_{k}(\mathbf{x}_{k})\mathbf{h}_{k}^{T}(\mathbf{x}_{k})\mathrm{St}(\mathbf{x}_{k};\mathbf{\hat{x}}_{k|k-1},\mathbf{P}_{k|k-1},
 \nonumber\\
&&\nu_{3})d\mathbf{x}_{k}-\frac{\nu_{3}-2}{\nu_{3}}\mathbf{\hat{z}}_{k|k-1}\mathbf{\hat{z}}_{k|k-1}^{T}+\frac{\nu_{2}(\nu_{3}-2)}{(\nu_{2}-2)\nu_{3}}\mathbf{R}_{k}
\end{eqnarray}
\begin{eqnarray}
&&\mathbf{P}_{k|k-1}^{xz}=\frac{\nu_{3}-2}{\nu_{3}}\int_{\mathbb{R}^{n}} \mathbf{x}_{k}\mathbf{h}_{k}^{T}(\mathbf{x}_{k})\mathrm{St}(\mathbf{x}_{k};\mathbf{\hat{x}}_{k|k-1},\mathbf{P}_{k|k-1},\nu_{3}) \nonumber\\
&&d\mathbf{x}_{k}-\frac{\nu_{3}-2}{\nu_{3}}\mathbf{\hat{x}}_{k|k-1}\mathbf{\hat{z}}_{k|k-1}^{T}.
\end{eqnarray}

The recursive RSTNF is composed of the analytical computations in equations (7)-(10) and the Student's t weighted integrals in equations (5)-(6) and (11)-(13). The key problem in the design of the RSTNF is calculating the nonlinear Student's t weighted integrals formulated in equations (5)-(6) and (11)-(13), whose integrands are all of the form \emph{nonlinear function}$\times$\emph{Student's t PDF}. Therefore, the numerical integral technique is required to implement the RSTNF, which determines the estimation accuracy of associated RSTNF. Next, to further improve the estimation accuracy of existing RSTNFs, a new SSTSRCR will be proposed, based on which a new RSTSCF can be obtained.

\section{Main Results}
\subsection{Spherical-radial transformation}
The Student's t weighted integrals involved in the RSTNF can be written as the general form as follows
\begin{equation}
I[\mathbf{g}]=\int_{\mathbb{R}^{n}}\mathbf{g}(\mathbf{x})\mathrm{St}(\mathbf{x};\mathbf{\mu},\mathbf{\Sigma},\nu)d\mathbf{x},
\end{equation}
where the Student's t PDF is given by
\begin{eqnarray}
&&\mathrm{St}(\mathbf{x};\mathbf{\mu},\mathbf{\Sigma},\nu)=\frac{\Gamma(\frac{\nu+n}{2})}{\Gamma(\frac{\nu}{2})}\frac{1}{\sqrt{|\nu\pi\mathbf{\Sigma}|}}\times \nonumber\\
&&\left[1+\frac{1}{\nu}(\mathbf{x}-\mathbf{\mu})^{\mathrm{T}}\mathbf{\Sigma}^{-1}(\mathbf{x}-\mathbf{\mu})\right]^{-\frac{\nu+n}{2}},
\end{eqnarray}
where $\Gamma(\cdot)$ and $|\cdot|$ denote the Gamma function and determinant operation respectively. To derive the SSTSRCR, the Student's t weighted integral in equation (14) requires to be transformed into a spherical-radial integral form.

A change of variable is utilized as follows
\begin{equation}
\mathbf{x}=\mathbf{\mu}+\sqrt{\nu\mathbf{\Sigma}}\mathbf{y},
\end{equation}
where $\sqrt{\mathbf{\Sigma}}$ is the square-root of scale matrix $\mathbf{\Sigma}$ satisfying $\mathbf{\Sigma}=\sqrt{\mathbf{\Sigma}}\sqrt{\mathbf{\Sigma}}^{T}$.

Substituting equation (16) in equations (14)-(15) and using the identity $|\sqrt{\nu\mathbf{\Sigma}}|=\sqrt{|\nu\mathbf{\Sigma}|}$ yields
\begin{equation}
I[\mathbf{g}]=\int_{\mathbb{R}^{n}}\mathbf{l}(\mathbf{y})(1+\mathbf{y}^{\mathrm{T}}\mathbf{y})^{-\frac{\nu+n}{2}}d\mathbf{y},
\end{equation}
where $\mathbf{l}(\mathbf{y})$ is given by
\begin{equation}
\mathbf{l}(\mathbf{y})=\frac{\Gamma(\frac{\nu+n}{2})}{\Gamma(\frac{\nu}{2})\pi^{\frac{n}{2}}}\mathbf{g}(\mathbf{\mu}+\sqrt{\nu\mathbf{\Sigma}}\mathbf{y}).
\end{equation}

Define $\mathbf{y}=r\mathbf{s}$ with $\mathbf{s}^{T}\mathbf{s}=1$, then equation (17) can be rewritten as \cite{Stroud 1971}
\begin{eqnarray}
I[\mathbf{g}]&=&\int_{0}^{+\infty}\int_{U_{n}}\mathbf{l}(r\mathbf{s})r^{n-1}[1+(r\mathbf{s})^{T}(r\mathbf{s})]^{-\frac{\nu+n}{2}}d\sigma(s)dr \nonumber\\
&=&\int_{0}^{+\infty}\int_{U_{n}}\mathbf{l}(r\mathbf{s})r^{n-1}(1+r^2)^{-\frac{\nu+n}{2}}d\sigma(s)dr,
\end{eqnarray}
where $\mathbf{s}=[\mathbf{s}_{1},\mathbf{s}_{2},\cdots,\mathbf{s}_{n}]^{T}$, $U_{n}=\{\mathbf{s}\in\mathbb{R}^{n}:s_{1}^{2}+s_{2}^{2}+\cdots+s_{n}^{2}=1\}$, and $\sigma(\mathbf{s})$ is the spherical surface measure or an area element on $U_{n}$.

According to equation (19), the Student's t weighted integral in equation (14) can be decomposed into the radial integral
\begin{equation}
I[\mathbf{g}]=\int_{0}^{+\infty}\mathbf{S}(r)r^{n-1}(1+r^2)^{-\frac{\nu+n}{2}}dr,
\end{equation}
and the spherical integral
\begin{equation}
\mathbf{S}(r)=\int_{U_{n}}\mathbf{l}(r\mathbf{s})d\sigma(s).
\end{equation}

Next, a new third-degree SSTSRCR will be derived, in which the spherical and the radial integrals are respectively calculated by the third-degree USR (Section III. B below) and the third-degree URR (Section III. C below). Before deriving the third-degree SSTSRCR, the unbiased integral rule is firstly defined as follows.
\begin{definition}
The integral rule $\int\mathbf{g}(\mathbf{x})p(\mathbf{x})d\mathbf{x}\approx\sum\limits_{l=1}^{N}w_{l}\mathbf{g}(\mathbf{x}_{l})$ is unbiased if and only if \cite{Genz 1998}
\begin{equation}
\int\mathbf{g}(\mathbf{x})p(\mathbf{x})d\mathbf{x}=E\left[\sum\limits_{l=1}^{N}w_{l}\mathbf{g}(\mathbf{x}_{l})\right],
\end{equation}
where $\mathbf{x}_{l}$ and $w_{l}$ are respectively cubature points and corresponding weights, and $E[\cdot]$ denotes the expectation operation.
\end{definition}

\subsection{Unbiased spherical rule}
The Stewart's method is employed to construct the third-degree USR. If $\mathbf{Q}$ is a random orthogonal matrix drawn with a Haar distribution from the set of all matrices in the orthogonal group, the third-degree USR can be constructed as \cite{Genz 1998,Stewart 1980}
\begin{equation}
\mathbf{S}_{u}^{3}(r)\approx\frac{A_{n}}{2n}\sum\limits_{i=1}^{n}\left[\mathbf{l}(-r\mathbf{Q}\mathbf{e}_{i})+\mathbf{l}(r\mathbf{Q}\mathbf{e}_{i})\right],
\end{equation}
where $A_{n}=\frac{2\pi^{\frac{n}{2}}}{\Gamma(\frac{n}{2})}$ is the surface area of the unit sphere, and $\mathbf{e}_{i}$ denotes the $i$-th column of an $n\times{n}$ unit matrix. To produce a random orthogonal matrix $\mathbf{Q}$, a $n\times n$ matrix $\mathbf{U}$ of standard norm variables is first generated, then the required random orthogonal matrix $\mathbf{Q}$ is obtained based on the $QR$ factorization, i.e., $\mathbf{U}=\mathbf{Q}\mathbf{R}$ \cite{Stewart 1980}.

Next, a new third-degree URR will be proposed for the radial integral in equation (20).

\subsection{Unbiased radial rule}
Generally, the monomials $S(r)=1$, $S(r)=r$, $S(r)=r^{2}$, and $S(r)=r^{3}$ need to be matched to derive the third-degree URR. However, only monomials $S(r)=1$ and $S(r)=r^{2}$ need to be matched for the third-degree URR since the USR and the resultant STSRCR are fully symmetry. Thus, two points $\{r_{1}, \omega_{r,1}\}$ and $\{r_{2}, \omega_{r,2}\}$ are sufficient to design the third-degree URR, where one point is used to match monomials $S(r)=1$ and $S(r)=r^{2}$ and the other point is employed to retain unbiasedness. That is to say, the third-degree URR can be written as
\begin{equation}
\int_{0}^{+\infty}\mathbf{S}(r)r^{n-1}(1+r^2)^{-\frac{\nu+n}{2}}dr\approx\omega_{r,1}\mathbf{S}(r_{1})+\omega_{r,2}\mathbf{S}(r_{2}),
\end{equation}
where $\{r_{1}, \omega_{r,1}\}$ and $\{r_{2}, \omega_{r,2}\}$ satisfy the following equations
\begin{equation}
\omega_{r,1}r_{1}^{0}+\omega_{r,2}r_{2}^{0}=\int_{0}^{+\infty}r^{0}r^{n-1}(1+r^2)^{-\frac{\nu+n}{2}}dr
\end{equation}
\begin{equation}
\omega_{r,1}r_{1}^{2}+\omega_{r,2}r_{2}^{2}=\int_{0}^{+\infty}r^{2}r^{n-1}(1+r^2)^{-\frac{\nu+n}{2}}dr
\end{equation}
\begin{equation}
\int_{0}^{+\infty}\mathbf{S}(r)r^{n-1}(1+r^2)^{-\frac{\nu+n}{2}}dr=E\left[\omega_{r,1}\mathbf{S}(r_{1})+\omega_{r,2}\mathbf{S}(r_{2})\right].
\end{equation}

Since there are three equations and four variables in equations (25)-(27), there is one free variable. In order to get the STSRCR with the minimum number of points, $r_{1}$ is chosen as the free variable and set to zero.

\begin{theorem}
If $r_{1}=0$ and the PDF of random variable $r_{2}$ is
\begin{equation}
p(r_{2})=2r_{2}^{n+1}(1+r_{2}^{2})^{-\frac{\nu+n}{2}}/\mathrm{B}(\frac{n+2}{2},\frac{\nu-2}{2}),
\end{equation}
where $\mathrm{B}(\cdot,\cdot)$ denotes the beta function, then the third-degree URR is given by
{\small\begin{equation}
I[\mathbf{g}]\approx\frac{1}{2}\mathrm{B}(\frac{n}{2},\frac{\nu}{2})
\left\{\left[1-\frac{n}{(\nu-2)r_{2}^{2}}\right]\mathbf{S}(0)+\frac{n}{(\nu-2)r_{2}^{2}}\mathbf{S}(r_{2})\right\}.
\end{equation}}
\end{theorem}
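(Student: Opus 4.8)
The plan is to solve the three-equation system (25)–(27) under the choice $r_{1}=0$, which collapses two of the unknowns and leaves a clean moment-matching problem. First I would evaluate the two deterministic radial moments on the right-hand sides of (25) and (26). These are one-dimensional integrals of the form $\int_{0}^{+\infty} r^{p+n-1}(1+r^{2})^{-\frac{\nu+n}{2}}\,dr$; the substitution $t=r^{2}/(1+r^{2})$ (equivalently $u=r^{2}$) turns each into a Beta integral, giving $\int_{0}^{+\infty} r^{n-1}(1+r^{2})^{-\frac{\nu+n}{2}}\,dr=\tfrac12\mathrm{B}(\tfrac n2,\tfrac{\nu}{2})$ for the $S(r)=1$ case and $\tfrac12\mathrm{B}(\tfrac{n+2}{2},\tfrac{\nu-2}{2})$ for the $S(r)=r^{2}$ case. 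Using the identity $\mathrm{B}(\tfrac{n+2}{2},\tfrac{\nu-2}{2})=\tfrac{n}{\nu-2}\mathrm{B}(\tfrac n2,\tfrac\nu2)$ (from $\Gamma(a+1)=a\Gamma(a)$) will let me express the second moment as a multiple of the first, which is what makes the final weights come out in terms of $\mathrm{B}(\tfrac n2,\tfrac\nu2)$ alone.

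Next, with $r_{1}=0$, equation (26) reads $\omega_{r,2}r_{2}^{2}=\tfrac12\mathrm{B}(\tfrac{n+2}{2},\tfrac{\nu-2}{2})$, so $\omega_{r,2}=\tfrac{1}{2r_{2}^{2}}\mathrm{B}(\tfrac{n+2}{2},\tfrac{\nu-2}{2})=\tfrac{n}{2(\nu-2)r_{2}^{2}}\mathrm{B}(\tfrac n2,\tfrac\nu2)$, and then (25) gives $\omega_{r,1}=\tfrac12\mathrm{B}(\tfrac n2,\tfrac\nu2)-\omega_{r,2}=\tfrac12\mathrm{B}(\tfrac n2,\tfrac\nu2)\big[1-\tfrac{n}{(\nu-2)r_{2}^{2}}\big]$. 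Substituting these into the quadrature $\omega_{r,1}\mathbf{S}(0)+\omega_{r,2}\mathbf{S}(r_{2})$ and recalling that $I[\mathbf g]$ equals the radial integral (20) yields exactly (30). So the "deterministic" part of the statement is essentially bookkeeping once the Beta-integral evaluations are in hand.

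The substantive point is the unbiasedness requirement (27), which is what pins down the \emph{distribution} of $r_{2}$ in (29) rather than a fixed value. Here I would impose that for an arbitrary (integrable) radial integrand $\mathbf{S}(\cdot)$, $E[\omega_{r,1}\mathbf{S}(0)+\omega_{r,2}(r_{2})\mathbf{S}(r_{2})]$ reproduces the exact integral (20). Since the $\mathbf{S}(0)$ term is deterministic and already matches the $S(r)=1$ contribution only in expectation-adjusted form, the condition reduces to requiring $E[\omega_{r,2}(r_{2})\,\mathbf{S}(r_{2})]=\int_{0}^{+\infty}\mathbf{S}(r)\,r^{n-1}(1+r^{2})^{-\frac{\nu+n}{2}}\,dr$ minus the contribution already captured by the $\mathbf S(0)$ piece; writing $E[\cdot]=\int_{0}^{+\infty}(\cdot)\,p(r_{2})\,dr_{2}$ and using $\omega_{r,2}(r_{2})=\tfrac{1}{2r_{2}^{2}}\mathrm{B}(\tfrac{n+2}{2},\tfrac{\nu-2}{2})$, matching the integrands forces $\tfrac{1}{2r_{2}^{2}}\mathrm{B}(\tfrac{n+2}{2},\tfrac{\nu-2}{2})\,p(r_{2}) \propto r_{2}^{n-1}(1+r_{2}^{2})^{-\frac{\nu+n}{2}}$, i.e. $p(r_{2})\propto r_{2}^{n+1}(1+r_{2}^{2})^{-\frac{\nu+n}{2}}$. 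Normalizing this density over $(0,+\infty)$ via the same Beta substitution produces the constant $2/\mathrm{B}(\tfrac{n+2}{2},\tfrac{\nu-2}{2})$, giving (28). I expect the main obstacle to be stating the unbiasedness bookkeeping cleanly — being careful about which moment is matched deterministically (via (25)–(26)) versus which is matched only in expectation, so that the characterization of $p(r_{2})$ is genuinely forced and not merely one admissible choice — together with tracking the $\nu>2$ (and $\nu>2$, $n\ge1$) conditions needed for all the Beta integrals to converge.
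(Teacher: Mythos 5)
Your proposal is correct and rests on exactly the same computations as the paper's proof: the substitution $t=r^{2}$ reducing $\int_{0}^{+\infty}r^{l}r^{n-1}(1+r^{2})^{-\frac{\nu+n}{2}}dr$ to $\tfrac{1}{2}\mathrm{B}(\tfrac{n+l}{2},\tfrac{\nu-l}{2})$, the identity $\mathrm{B}(\tfrac{n+2}{2},\tfrac{\nu-2}{2})=\tfrac{n}{\nu-2}\mathrm{B}(\tfrac{n}{2},\tfrac{\nu}{2})$, and the resulting weights $\omega_{r,1},\omega_{r,2}$. The one place you diverge is the logical direction of the unbiasedness step: the paper takes $p(r_{2})$ in (28) as given and \emph{verifies} (27) by computing $E[1-\tfrac{n}{(\nu-2)r_{2}^{2}}]=0$ and $E[\tfrac{n}{(\nu-2)r_{2}^{2}}\mathbf{S}(r_{2})]$ explicitly, whereas you \emph{derive} $p(r_{2})$ by forcing $\omega_{r,2}(r_{2})\,p(r_{2})$ to equal the radial weight function; your route is slightly stronger since it shows the density is essentially forced rather than merely admissible. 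The only loose end in your write-up is the $\mathbf{S}(0)$ term: once $\omega_{r,2}(r_{2})p(r_{2})=r_{2}^{n-1}(1+r_{2}^{2})^{-\frac{\nu+n}{2}}$ reproduces the \emph{entire} integral in expectation, unbiasedness for arbitrary $\mathbf{S}$ additionally requires $E[\omega_{r,1}(r_{2})]=0$; this is a one-line check, $E[r_{2}^{-2}]=\tfrac{\nu-2}{n}$, following from the same Beta integral you already evaluated, and you should state it explicitly rather than leave it at ``expectation-adjusted form.''
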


\begin{proof}
Firstly, a general integral $\int_{0}^{+\infty}r^{l}r^{n-1}(1+r^2)^{-\frac{\nu+n}{2}}dr$ is calculated to obtain the right-hand parts in equations (25)-(26). Making a change of variable via $t=r^{2}$ results in
\begin{equation}
\int_{0}^{+\infty}r^{l}r^{n-1}(1+r^2)^{-\frac{\nu+n}{2}}dr=\frac{1}{2}\mathrm{B}(\frac{n+l}{2},\frac{\nu-l}{2}),
\end{equation}
where $\mathrm{B}(\cdot,\cdot)$ denotes the beta function.

Substituting equation (30) in equations (25)-(26), we have
\begin{equation}
\omega_{r,1}+\omega_{r,2}=\frac{1}{2}\mathrm{B}(\frac{n}{2},\frac{\nu}{2})
\end{equation}
\begin{equation}
\omega_{r,1}r_{1}^{2}+\omega_{r,2}r_{2}^{2}=\frac{1}{2}\mathrm{B}(\frac{n+2}{2},\frac{\nu-2}{2}).
\end{equation}

Utilizing the identities $\Gamma(a+1)=a\Gamma(a)$ and $\mathrm{B}(a,b)=\frac{\Gamma(a)\Gamma(b)}{\Gamma(a+b)}$ in equation (32) yields
\begin{equation}
\omega_{r,1}r_{1}^{2}+\omega_{r,2}r_{2}^{2}=\frac{n}{2(\nu-2)}\mathrm{B}(\frac{n}{2},\frac{\nu}{2}).
\end{equation}

Employing $r_{1}=0$ in equation (33) yields
\begin{equation}
\omega_{r,2}=\frac{n}{2(\nu-2)r_{2}^{2}}\mathrm{B}(\frac{n}{2},\frac{\nu}{2}).
\end{equation}

Substituting equation (34) in equation (31) results in
\begin{equation}
\omega_{r,1}=\frac{1}{2}\mathrm{B}(\frac{n}{2},\frac{\nu}{2})\left[1-\frac{n}{(\nu-2)r_{2}^{2}}\right].
\end{equation}

Utilizing $r_{1}=0$ and equations (34)-(35), the expectation of the third-degree radial rule with respect to $p(r_{2})$ is written as
{\small\begin{eqnarray}
&&E\left[\omega_{r,1}\mathbf{S}(r_{1})+\omega_{r,2}\mathbf{S}(r_{2})\right]=\frac{1}{2}\mathrm{B}(\frac{n}{2},\frac{\nu}{2})E\left[1-\frac{n}{(\nu-2)r_{2}^{2}}\right]\times \nonumber\\
&&\mathbf{S}(0)+\frac{1}{2}\mathrm{B}(\frac{n}{2},\frac{\nu}{2})E\left[\frac{n}{(\nu-2)r_{2}^{2}}\mathbf{S}(r_{2})\right].
\end{eqnarray}}

Using equations (28) and (30), we have
\begin{eqnarray}
&&E\left[1-\frac{n}{(\nu-2)r_{2}^{2}}\right]=\int_{0}^{+\infty}\frac{2r_{2}^{n+1}(1+r_{2}^{2})^{-\frac{\nu+n}{2}}}{\mathrm{B}(\frac{n+2}{2},\frac{\nu-2}{2})}dr_{2}- \nonumber\\
&&\frac{n}{(\nu-2)}\int_{0}^{+\infty}\frac{2r_{2}^{n-1}(1+r_{2}^{2})^{-\frac{\nu+n}{2}}}{\mathrm{B}(\frac{n+2}{2},\frac{\nu-2}{2})}dr_{2}=0
\end{eqnarray}
{\small\begin{eqnarray}
&&E\left[\frac{n}{(\nu-2)r_{2}^{2}}\mathbf{S}(r_{2})\right]=\frac{n}{(\nu-2)}\int_{0}^{+\infty}\frac{2r_{2}^{n-1}(1+r_{2}^{2})^{-\frac{\nu+n}{2}}}{\mathrm{B}(\frac{n+2}{2},\frac{\nu-2}{2})}\times  \nonumber\\
&&\mathbf{S}(r_{2})dr_{2}=\frac{2}{\mathrm{B}(\frac{n}{2},\frac{\nu}{2})}\int_{0}^{+\infty}\mathbf{S}(r)r^{n-1}(1+r^2)^{-\frac{\nu+n}{2}}dr.
\end{eqnarray}}

Substituting equations (37)-(38) in equation (36) yields
\begin{equation}
E\left[\omega_{r,1}\mathbf{S}(r_{1})+\omega_{r,2}\mathbf{S}(r_{2})\right]=\int_{0}^{+\infty}\mathbf{S}(r)r^{n-1}(1+r^2)^{-\frac{\nu+n}{2}}dr.
\end{equation}

With $r_{1}=0$, equations (34)-(35) and (39), the third-degree URR can be formulated as equation (29).
\end{proof}
It is very difficult to directly generate random samples from $p(r_{2})$ since $p(r_{2})$ is not a special PDF. To solve this problem, Theorem 2 is presented as follows.
\begin{theorem}
If random variable $\tau=\frac{r_{2}^{2}}{1+r_{2}^{2}}$, then random variable $\tau$ obeys the Beta distribution, i.e.,
\begin{equation}
p(\tau)=\mathrm{Beta}(\tau;\frac{n+2}{2},\frac{\nu-2}{2})=\frac{\tau^{\frac{n+2}{2}-1}(1-\tau)^{\frac{\nu-2}{2}-1}}{\mathrm{B}(\frac{n+2}{2},\frac{\nu-2}{2})},
\end{equation}
where $\mathrm{Beta}(\cdot;\alpha,\beta)$ denotes the beta PDF with parameters $\alpha$ and $\beta$.
\end{theorem}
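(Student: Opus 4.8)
The plan is to apply the standard univariate change-of-variables formula for probability densities to the monotone map $\tau=g(r_2)=r_2^{2}/(1+r_2^{2})$ and verify that the resulting density is the stated Beta kernel.

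First I would establish that $g$ is a well-defined change of variable on the support of $r_2$. Writing $g(r_2)=1-1/(1+r_2^{2})$, we have $g'(r_2)=2r_2/(1+r_2^{2})^{2}>0$ on $(0,+\infty)$, with $g(0^{+})=0$ and $g(+\infty)=1$, so $g$ is a strictly increasing $C^{1}$ bijection from $(0,+\infty)$ onto $(0,1)$. Inverting $\tau(1+r_2^{2})=r_2^{2}$ gives $r_2^{2}=\tau/(1-\tau)$ and hence $1+r_2^{2}=1/(1-\tau)$, i.e. $r_2=\sqrt{\tau/(1-\tau)}$ for $\tau\in(0,1)$.

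Next I would use $p(\tau)=p(r_2)\,\bigl|dr_2/d\tau\bigr|=p(r_2)/g'(r_2)$ with $r_2=r_2(\tau)$. Substituting $p(r_2)$ from equation (28) and $g'(r_2)=2r_2(1+r_2^{2})^{-2}$, the factors of $2r_2$ cancel and one obtains
\[
p(\tau)=\frac{r_2^{\,n}\,(1+r_2^{2})^{\,2-\frac{\nu+n}{2}}}{\mathrm{B}\!\left(\frac{n+2}{2},\frac{\nu-2}{2}\right)}.
\]
Then I would insert $r_2^{\,n}=\tau^{n/2}(1-\tau)^{-n/2}$ and $(1+r_2^{2})^{\,2-\frac{\nu+n}{2}}=(1-\tau)^{\frac{\nu+n}{2}-2}$; collecting powers of $1-\tau$ yields exponent $\frac{\nu-4}{2}=\frac{\nu-2}{2}-1$, while the power of $\tau$ is $\frac{n}{2}=\frac{n+2}{2}-1$. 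This is exactly the kernel of $\mathrm{Beta}\!\left(\frac{n+2}{2},\frac{\nu-2}{2}\right)$, and since the constant is already $\mathrm{B}\!\left(\frac{n+2}{2},\frac{\nu-2}{2}\right)$ the density is correctly normalised — consistent with equation (30) at $l=2$, which simultaneously confirms that $p(r_2)$ in (28) integrates to one. Hence $p(\tau)=\mathrm{Beta}\!\left(\tau;\frac{n+2}{2},\frac{\nu-2}{2}\right)$, which is equation (40).

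There is no genuine obstacle in this argument; the only point requiring care is correctly propagating the extra power of $1+r_2^{2}$ (equivalently of $1-\tau$) introduced by the Jacobian $1/g'(r_2)$, so that the $(1-\tau)$ exponent comes out as $\frac{\nu-2}{2}-1$ rather than something shifted by a unit.
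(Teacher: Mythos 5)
Your proposal is correct and follows essentially the same route as the paper: a univariate change of variables from $r_{2}$ to $\tau$, with the only cosmetic difference that you express the Jacobian as $1/g'(r_{2})$ while the paper differentiates the inverse map $c(\tau)=\sqrt{\tau/(1-\tau)}$ directly, which is the same quantity. Your added checks (monotonicity of the map and consistency of the normalising constant with equation (30)) are sound but not needed beyond what the paper does.
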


\begin{proof}
Since $\tau=\frac{r_{2}^{2}}{1+r_{2}^{2}}$ and $r_{2}\in[0,+\infty)$, random variable $\tau\in[0,1)$. According to $\tau=\frac{r_{2}^{2}}{1+r_{2}^{2}}$, $r_{2}$ is formulated as
\begin{equation}
r_{2}=c(\tau)=\sqrt{\frac{\tau}{1-\tau}} \qquad \tau\in[0,1).
\end{equation}

Employing the transformation theorem and equation (41), the PDF of random variable $\tau$ is given by
\begin{equation}
p(\tau)=p_{r_{2}}(c(\tau))c'(\tau),
\end{equation}
where $p_{r_{2}}(\cdot)$ denotes the PDF of $r_{2}$ and $c'(\tau)$ denotes the derivative of $c(\tau)$ with respect to $\tau$ given by
\begin{equation}
c'(\tau)=0.5\tau^{-\frac{1}{2}}(1-\tau)^{-\frac{3}{2}}.
\end{equation}

Substituting equations (28), (41) and (43) in equation (42) obtains
\begin{equation}
p(\tau)=\tau^{\frac{n+2}{2}-1}(1-\tau)^{\frac{\nu-2}{2}-1}/\mathrm{B}(\frac{n+2}{2},\frac{\nu-2}{2}),
\end{equation}
which proves the theorem.
\end{proof}

\subsection{Stochastic STSRCR}
A theorem is first presented to derive the unbiased STSRCR.
\begin{theorem}
If the spherical and radial rules are unbiased, then the resultant STSRCR is also unbiased.
\end{theorem}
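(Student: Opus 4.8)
The plan is to prove unbiasedness by the tower property (law of total expectation), organising the STSRCR's two independent sources of randomness: the random orthogonal matrix $\mathbf{Q}$ that generates the spherical nodes in the USR, and the random radius (here $r_{1}=0$ deterministically and $r_{2}$ drawn from $p(r_{2})$) used by the URR. First I would write the combined rule explicitly: inserting the spherical estimator $\hat{\mathbf{S}}(r)=\frac{A_{n}}{2n}\sum_{i=1}^{n}[\mathbf{l}(-r\mathbf{Q}\mathbf{e}_{i})+\mathbf{l}(r\mathbf{Q}\mathbf{e}_{i})]$ into the radial rule gives the STSRCR estimator $\hat{I}[\mathbf{g}]=\sum_{j}\omega_{r,j}\hat{\mathbf{S}}(r_{j})$, and I would stress that the radial weights $\omega_{r,j}$ and nodes $r_{j}$ are measurable functions of the radial randomness alone, while $\mathbf{Q}$ is drawn independently with the Haar distribution.

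Next I would condition on the radial data $\{r_{j},\omega_{r,j}\}$. Conditionally the $\omega_{r,j}$ are constants, so by linearity they pull out of the inner expectation over $\mathbf{Q}$; applying the assumed unbiasedness of the spherical rule at each fixed radius $r_{j}$ gives $E_{\mathbf{Q}}[\hat{\mathbf{S}}(r_{j})\,|\,r_{j}]=\mathbf{S}(r_{j})$, hence $E[\hat{I}[\mathbf{g}]\,|\,\{r_{j}\}]=\sum_{j}\omega_{r,j}\mathbf{S}(r_{j})$, which is precisely the exact-spherical radial rule applied to $\mathbf{S}$. Taking the outer expectation over the radial randomness and invoking the assumed unbiasedness of the radial rule, i.e. $E[\sum_{j}\omega_{r,j}\mathbf{S}(r_{j})]=\int_{0}^{+\infty}\mathbf{S}(r)r^{n-1}(1+r^{2})^{-\frac{\nu+n}{2}}dr=I[\mathbf{g}]$, then yields $E[\hat{I}[\mathbf{g}]]=I[\mathbf{g}]$, which is unbiasedness in the sense of Definition 1.

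The argument uses nothing beyond linearity of expectation and the tower rule, so there is no hard estimate to carry out; the two points I would be careful to state are the structural hypotheses that make each step legitimate. The first is that $\mathbf{Q}$ and the radial variables are generated independently, so that conditioning on the radii leaves $\mathbf{Q}$ Haar-distributed and the spherical rule still applies. The second is that the spherical rule must be unbiased for \emph{every} radius $r\ge 0$, not merely in some averaged sense, because the radius $r_{j}$ at which it is evaluated is itself random; the USR of Section III.B has this property by construction, since its expectation identity holds for an arbitrary integrand $\mathbf{l}$. I expect articulating this conditional-independence bookkeeping cleanly, rather than any computation, to be the only real content of the proof.
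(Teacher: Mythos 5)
Your proof is correct and is essentially the same argument as the paper's: the paper likewise substitutes the spherical unbiasedness identity (at each fixed radius) into the radial unbiasedness identity and then invokes the independence of the spherical point set $\{\mathbf{s}_{i},w_{s,i}\}$ from the radial point set $\{r_{j},w_{r,j}\}$ to collapse the nested expectation into a single one. Your conditioning-on-the-radii phrasing is just the tower-property view of the same computation, and your two explicit caveats (independence of the two randomness sources, and spherical unbiasedness holding pointwise in $r$) are exactly the hypotheses the paper uses implicitly.
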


\begin{proof}
If the spherical and radial rules are given by
\begin{equation}
\mathbf{S}(r)\approx\sum\limits_{i=1}^{N_{s}}w_{s,i}\mathbf{l}(r\mathbf{s}_{i}); \qquad
I[\mathbf{g}]\approx\sum\limits_{j=1}^{N_{r}}w_{r,j}\mathbf{S}(r_{j}),
\end{equation}
then the STSRCR can be formulated as
\begin{equation}
I[\mathbf{g}]\approx\sum\limits_{j=1}^{N_{r}}\sum\limits_{i=1}^{N_{s}}w_{r,j}w_{s,i}\mathbf{l}(r_{j}\mathbf{s}_{i}),
\end{equation}
where $\mathbf{s}_{i}$ and $w_{s,i}$ are respectively cubature points and weights of the spherical rule, and $r_{j}$ and $w_{r,j}$ are respectively quadrature points and weights of the radial rule.

Since the spherical and radial rules are unbiased, we obtain
\begin{eqnarray}
\mathbf{S}(r)=E\left[\sum\limits_{i=1}^{N_{s}}w_{s,i}\mathbf{l}(r\mathbf{s}_{i})\right]; \; I[\mathbf{g}]=E\left[\sum\limits_{j=1}^{N_{r}}w_{r,j}\mathbf{S}(r_{j})\right].
\end{eqnarray}

Using equation (47) yields
\begin{equation}
I[\mathbf{g}]=E\left\{\sum\limits_{j=1}^{N_{r}}w_{r,j}E\left[\sum\limits_{i=1}^{N_{s}}w_{s,i}\mathbf{l}(r_{j}\mathbf{s}_{i})\right]\right\}.
\end{equation}

Since the set $\{\mathbf{s}_{i},\,w_{s,i}\}_{i=1}^{N_s}$ is independent of the set $\{\mathbf{r}_{j},\,w_{r,j}\}_{j=1}^{N_r}$, we have
\begin{equation}
I[\mathbf{g}]=E\left[\sum\limits_{j=1}^{N_{r}}\sum\limits_{i=1}^{N_{s}}w_{r,j}w_{s,i}\mathbf{l}(r_{j}\mathbf{s}_{i})\right],
\end{equation}
which proves the theorem.
\end{proof}

Using Theorems 1-3 obtains
\begin{eqnarray}
&&I[\mathbf{g}]=E\left\{\left[1-\frac{n}{(\nu-2)r_{2}^{2}}\right]\mathbf{g}(\mathbf{\mu})+\frac{1}{2(\nu-2)r_{2}^{2}}\times\right. \nonumber\\
&&\left.\sum\limits_{i=1}^{n}\left[\mathbf{g}(\mathbf{\mu}-r_{2}\sqrt{\nu\mathbf{\Sigma}}\mathbf{Q}\mathbf{e}_{i})+\mathbf{g}(\mathbf{\mu}+r_{2}\sqrt{\nu\mathbf{\Sigma}}\mathbf{Q}\mathbf{e}_{i})\right]\right\}.
\end{eqnarray}

By employing the Monte Carlo approach, the right-hand parts of equation (50) can be approximated as
\begin{eqnarray}
&&I_{s}^{3}[\mathbf{g}]=\frac{1}{N}\sum\limits_{l=1}^{N}\left\{\left[1-\frac{n}{(\nu-2)r_{2,l}^{2}}\right]\mathbf{g}(\mathbf{\mu})+\frac{1}{2(\nu-2)r_{2,l}^{2}}\times\right. \nonumber\\
&&\left.\sum\limits_{i=1}^{n}\left[\mathbf{g}(\mathbf{\mu}-r_{2,l}\sqrt{\nu\mathbf{\Sigma}}\mathbf{Q}_{l}\mathbf{e}_{i})+\mathbf{g}(\mathbf{\mu}+r_{2,l}\sqrt{\nu\mathbf{\Sigma}}\mathbf{Q}_{l}\mathbf{e}_{i})\right]\right\},
\end{eqnarray}
where $N$ denotes the number of random samples, and $\mathbf{Q}_{l}$ is a random orthogonal matrix, and $r_{2,l}$ is drawn randomly from $p(r_{2})$. The form $I_{s}^{3}[\mathbf{g}]$ denotes the proposed third-degree SSTSRCR, and the implementation pseudocode of the proposed SSTSRCR is shown in Table I.
\begin{table}[!t]
\renewcommand{\arraystretch}{2.0}
\caption{The implementation pseudocode of the proposed SSTSRCR.}
\centering
\begin{tabular}[width=0.8in]{l}
\\
\hline
{\bfseries Inputs}: $\mathbf{\mu}$, $\mathbf{\Sigma}$, $\nu$, $\mathbf{g}(\cdot)$, $n$, $N$. \\
1. Initialization: $I_{s}^{3}[\mathbf{g}]=0$. \\
{\bfseries for} $l=1:N$ \\
2. Generate a $n\times n$ matrix $\mathbf{U}_{l}$ of standard norm variables. \\
3. Obtain the required random orthogonal matrix $\mathbf{Q}_{l}$ using the \\
$QR$ factorization: $\mathbf{U}_{l}=\mathbf{Q}_{l}\mathbf{R}_{l}$. \\
4. Draw the random variable $\tau_{l}$ from a Beta distribution: \\
$\tau_{l}\sim\mathrm{Beta}(\frac{n+2}{2},\frac{\nu-2}{2})$. \\
5. Calculate the random quadrature point $r_{2,l}$: \\
$r_{2,l}=\sqrt{\frac{\tau_{l}}{1-\tau_{l}}}$. \\
6. Update $I_{s}^{3}[\mathbf{g}]$ at current iteration: \\
$I_{s}^{3}[\mathbf{g}]=I_{s}^{3}[\mathbf{g}]+\frac{1}{N}\left\{\left[1-\frac{n}{(\nu-2)r_{2,l}^{2}}\right]\mathbf{g}(\mathbf{\mu})+\frac{1}{2(\nu-2)r_{2,l}^{2}}\times\right.$ \\
$\left.\sum\limits_{i=1}^{n}\left[\mathbf{g}(\mathbf{\mu}-r_{2,l}\sqrt{\nu\mathbf{\Sigma}}\mathbf{Q}_{l}\mathbf{e}_{i})+\mathbf{g}(\mathbf{\mu}+r_{2,l}\sqrt{\nu\mathbf{\Sigma}}\mathbf{Q}_{l}\mathbf{e}_{i})\right]\right\}$. \\
{\bfseries end for} \\
{\bfseries Outputs}: $I[\mathbf{g}]\approx I_{s}^{3}[\mathbf{g}]$. \\
\hline
\end{tabular}
\end{table}
\begin{table}[!t]
\renewcommand{\arraystretch}{2.0}
\caption{The implementation pseudocode for one time step of the proposed RSTSCF.}
\centering
\begin{tabular}[width=0.8in]{l}
\\
\hline
{\bfseries Inputs}: $\mathbf{\hat{x}}_{k-1|k-1}$, $\mathbf{P}_{k-1|k-1}$, $\mathbf{z}_{k}$, $\mathbf{Q}_{k-1}$, $\mathbf{R}_{k}$, $\nu_{1}$, $\nu_{2}$, $\nu_{3}$, $\mathbf{f}_{k-1}(\cdot)$, \\
$\mathbf{h}_{k}(\cdot)$, $n$, $N$. \\
{\bfseries Time update:} \\
1. $\mathbf{\hat{x}}_{k|k-1}=\mathrm{SSTSRCR}(\mathbf{\hat{x}}_{k-1|k-1},\mathbf{P}_{k-1|k-1},\nu_{3},\mathbf{f}_{k-1}(\cdot),n,N)$. \\
2. $\mathbf{P}_{k|k-1}=\frac{\nu_{3}-2}{\nu_{3}}\mathrm{SSTSRCR}(\mathbf{\hat{x}}_{k-1|k-1},\mathbf{P}_{k-1|k-1},\nu_{3},$ \\
$\mathbf{f}_{k-1}(\cdot)\mathbf{f}_{k-1}^{T}(\cdot),n,N)-\frac{\nu_{3}-2}{\nu_{3}}\mathbf{\hat{x}}_{k|k-1}\mathbf{\hat{x}}_{k|k-1}^{T}+\frac{\nu_{1}(\nu_{3}-2)}{(\nu_{1}-2)\nu_{3}}\mathbf{Q}_{k-1}$. \\
{\bfseries Measurement update:} \\
3. $\mathbf{\hat{z}}_{k|k-1}=\mathrm{SSTSRCR}(\mathbf{\hat{x}}_{k|k-1},\mathbf{P}_{k|k-1},\nu_{3},\mathbf{h}_{k}(\cdot),n,N)$. \\
4. $\mathbf{P}_{k|k-1}^{zz}=\frac{\nu_{3}-2}{\nu_{3}}\mathrm{SSTSRCR}(\mathbf{\hat{x}}_{k|k-1},\mathbf{P}_{k|k-1},\nu_{3},\mathbf{h}_{k}(\cdot)\mathbf{h}_{k}^{T}(\cdot),$ \\
$n,N)-\frac{\nu_{3}-2}{\nu_{3}}\mathbf{\hat{z}}_{k|k-1}\mathbf{\hat{z}}_{k|k-1}^{T}+\frac{\nu_{2}(\nu_{3}-2)}{(\nu_{2}-2)\nu_{3}}\mathbf{R}_{k}$. \\
5. $\mathbf{P}_{k|k-1}^{xz}=\frac{\nu_{3}-2}{\nu_{3}}\mathrm{SSTSRCR}(\mathbf{\hat{x}}_{k|k-1},\mathbf{P}_{k|k-1},\nu_{3},\mathbf{x}_{k}\mathbf{h}_{k}(\cdot),n,N)$ \\
$-\frac{\nu_{3}-2}{\nu_{3}}\mathbf{\hat{x}}_{k|k-1}\mathbf{\hat{z}}_{k|k-1}^{T}$. \\
6. $\Delta_{k}=\sqrt{(\mathbf{z}_{k}-\mathbf{\hat{z}}_{k|k-1})^{T}(\mathbf{P}_{k|k-1}^{zz})^{-1}(\mathbf{z}_{k}-\mathbf{\hat{z}}_{k|k-1})}$. \\
7. $\mathbf{K}_{k}=\mathbf{P}_{k|k-1}^{xz}(\mathbf{P}_{k|k-1}^{zz})^{-1}$. \\
8. $\mathbf{\hat{x}}_{k|k}=\mathbf{\hat{x}}_{k|k-1}+\mathbf{K}_{k}(\mathbf{z}_{k}-\mathbf{\hat{z}}_{k|k-1})$. \\
9. $\mathbf{P}_{k|k}=\frac{(\nu_{3}-2)(\nu_{3}+\Delta_{k}^{2})}{\nu_{3}(\nu_{3}+m-2)}(\mathbf{P}_{k|k-1}-\mathbf{K}_{k}\mathbf{P}_{k|k-1}^{zz}\mathbf{K}_{k}^{T})$. \\
{\bfseries Outputs}: $\mathbf{\hat{x}}_{k|k}$, $\mathbf{P}_{k|k}$. \\
\hline
\end{tabular}
\end{table}

According to the Monte Carlo approach, $I_{s}^{3}[\mathbf{g}]$ converges to $I[\mathbf{g}]$ when $N$ tends to infinity, i.e.,
\begin{equation}
\lim\limits_{N\rightarrow+\infty}I_{s}^{3}[\mathbf{g}]=I[\mathbf{g}].
\end{equation}

Thus, the proposed SSTSRCR provides asymptotically exact integral evaluations when $N$ tends to infinity. A new RSTSCF can be obtained by employing the proposed SSTSRCR to calculate the Student's t weighted integrals involved in the RSTNF, and the implementation pseudocode for one time step of the proposed RSTSCF is shown in Table II, where $\mathrm{SSTSRCR}(\cdot)$ denotes the proposed SSTSRCR algorithm. The proposed SSTSRCR can achieve better approximation to the Student's t weighted integral as compared with existing Student's t integral rules. As a result, the proposed RSTSCF has higher estimation accuracy than the existing RSTNFs.
\begin{remark}
The Monte Carlo approach can be also used to calculate the Student's t weighted integral, and it provides asymptotically exact integral evaluations when the number of random samples tends to infinity. However, it has low accuracy and slow convergence when the integrand is not approximately constant and the number of random samples is finite \cite{Genz 1998}. Fortunately, the proposed SSTSRCR is at least exact up to third-degree polynomials for any number of random samples, and it can capture more and more higher-degree moment information as the number of random samples increases.
\end{remark}

\subsection{Relationship between the proposed SSTSRCR and the existing SIR \cite{Dunik13}}
\begin{theorem}
The proposed SSTSRCR will degrade to the existing SIR when the dof parameter $\nu\rightarrow+\infty$, i.e.
\begin{eqnarray}
&&\lim_{\nu\rightarrow+\infty}I_{s}^{3}[\mathbf{g}]=\frac{1}{N}\sum\limits_{l=1}^{N}\left\{\left[1-\frac{n}{\rho_{l}^{2}}\right]\mathbf{g}(\mathbf{\mu})+\frac{1}{2\rho_{l}^{2}}\times\right. \nonumber\\
&&\left.\sum\limits_{i=1}^{n}\left[\mathbf{g}(\mathbf{\mu}-\rho_{l}\sqrt{\mathbf{\Sigma}}\mathbf{Q}_{l}\mathbf{e}_{i})+\mathbf{g}(\mathbf{\mu}+\rho_{l}\sqrt{\mathbf{\Sigma}}\mathbf{Q}_{l}\mathbf{e}_{i})\right]\right\},
\end{eqnarray}
where the right-hand side of the equation (53) is the SIR for the Gaussian weighted integral, and $\rho_{l}$ is drawn randomly from $p(\rho_{l})$ that is given by
\begin{equation}
p(\rho_{l})\propto\rho_{l}^{n+1}e^{-\frac{\rho_{l}^{2}}{2}}.
\end{equation}
\end{theorem}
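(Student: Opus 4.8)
The plan is to establish the limit in equation (53) by taking $\nu \to +\infty$ in the expression (51) for $I_{s}^{3}[\mathbf{g}]$ and tracking how each $\nu$-dependent object behaves. There are three such objects: the factor $\sqrt{\nu\mathbf{\Sigma}}$ appearing inside the arguments of $\mathbf{g}$, the scalar coefficients $1 - \tfrac{n}{(\nu-2)r_{2,l}^{2}}$ and $\tfrac{1}{2(\nu-2)r_{2,l}^{2}}$, and the distribution $p(r_{2})$ from which $r_{2,l}$ is drawn. The key observation is that $\sqrt{\nu\mathbf{\Sigma}}\, r_{2,l}$ and $(\nu-2)r_{2,l}^{2}$ must be rewritten in terms of a single scaled variable, which I would call $\rho_{l} := \sqrt{\nu}\, r_{2,l}$ (equivalently $\rho_{l} = \sqrt{\nu(\nu-2)/\nu}\,r_{2,l}$; the difference is immaterial in the limit). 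Under this substitution, $r_{2,l}\sqrt{\nu\mathbf{\Sigma}}\,\mathbf{Q}_{l}\mathbf{e}_{i} = \rho_{l}\sqrt{\mathbf{\Sigma}}\,\mathbf{Q}_{l}\mathbf{e}_{i}$ exactly, and $(\nu-2)r_{2,l}^{2} = \tfrac{\nu-2}{\nu}\rho_{l}^{2} \to \rho_{l}^{2}$, so the coefficients converge to $1 - \tfrac{n}{\rho_{l}^{2}}$ and $\tfrac{1}{2\rho_{l}^{2}}$ termwise. This recovers the bracketed summand on the right-hand side of (53); the outer $\tfrac{1}{N}\sum_{l=1}^{N}$ and the random orthogonal matrices $\mathbf{Q}_{l}$ are untouched by the limit.

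Second, I would identify the limiting distribution of $\rho_{l}$. By Theorem 1, $r_{2}$ has density $p(r_{2}) = 2r_{2}^{n+1}(1+r_{2}^{2})^{-\frac{\nu+n}{2}}/\mathrm{B}(\tfrac{n+2}{2},\tfrac{\nu-2}{2})$. Applying the change of variables $\rho = \sqrt{\nu}\,r_{2}$ gives a density on $\rho$ proportional to $\rho^{n+1}\bigl(1 + \tfrac{\rho^{2}}{\nu}\bigr)^{-\frac{\nu+n}{2}}$, after absorbing $\nu$-powers and the beta-function normalizer into the proportionality constant. Using the standard limit $\bigl(1+\tfrac{\rho^{2}}{\nu}\bigr)^{-\frac{\nu+n}{2}} \to e^{-\rho^{2}/2}$ as $\nu\to\infty$, the density converges pointwise to something proportional to $\rho^{n+1}e^{-\rho^{2}/2}$, which is exactly $p(\rho_{l})$ in equation (54). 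To make the convergence of the sample and of the integral rigorous simultaneously, I would invoke a dominated-convergence / continuous-mapping argument: the normalizing constants converge (a short computation with $\mathrm{B}(\tfrac{n+2}{2},\tfrac{\nu-2}{2}) = \Gamma(\tfrac{n+2}{2})\Gamma(\tfrac{\nu-2}{2})/\Gamma(\tfrac{\nu+n}{2})$ and Stirling, or simply noting the limiting density integrates to a finite positive constant), and the integrand $\mathbf{g}(\cdot)$ times the (bounded, for fixed $\rho_{l}$ bounded away from $0$) coefficients is continuous in $\nu$.

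The cleanest packaging is: for each fixed realization of the underlying randomness — i.e.\ fix the Haar-distributed $\mathbf{Q}_{l}$ and fix a driving uniform that generates $\tau_{l}$, hence $r_{2,l}$, hence $\rho_{l}$ — the summand in (51) converges pointwise to the corresponding summand in (53) as $\nu\to\infty$, because the map $\nu \mapsto r_{2,l}(\nu)$ induced by holding the uniform fixed and inverting the CDF is continuous and satisfies $\sqrt{\nu}\,r_{2,l}(\nu) \to \rho_{l}$ with $\rho_{l} \sim p(\rho_{l})$; then the finite sum of $N$ such terms converges, and one recognizes the right-hand side of (53) as precisely the SIR for the Gaussian weighted integral (citing \cite{Dunik13}). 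I expect the main obstacle to be the bookkeeping in step two: one must verify that the reparametrized density genuinely converges to $\rho^{n+1}e^{-\rho^{2}/2}$ up to normalization, handling the $\nu$-dependent normalizer via the asymptotics $\Gamma(\tfrac{\nu-2}{2})/\Gamma(\tfrac{\nu+n}{2}) \sim (\nu/2)^{-\frac{n+2}{2}}$ so that the factor of $\nu^{(n+2)/2}$ produced by the change of variables cancels cleanly, and to ensure the limit is not merely formal but holds in distribution so that ``$\rho_{l}$ drawn from $p(\rho_{l})$'' is the legitimate limiting statement. The treatment of the coefficient $1 - \tfrac{n}{(\nu-2)r_{2,l}^{2}}$, which can be negative and is singular as $r_{2,l}\to 0$, needs a word of care, but since the same singular structure appears in the SIR itself this is inherited rather than new.
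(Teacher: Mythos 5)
Your proposal is correct and follows essentially the same route as the paper: the paper substitutes $r_{2,l}=\rho_{l}/\sqrt{\nu-2}$ (making the coefficients exact and the argument of $\mathbf{g}$ limiting) while you use $\rho_{l}=\sqrt{\nu}\,r_{2,l}$ (making the argument exact and the coefficients limiting), and both then identify the limiting density of $\rho_{l}$ via the same change of variables, the Gamma-ratio asymptotic $\Gamma(t+\alpha)/(\Gamma(t)t^{\alpha})\to 1$, and the exponential limit $(1+\rho^{2}/\nu)^{-(\nu+n)/2}\to e^{-\rho^{2}/2}$. The difference in scaling is, as you note, immaterial in the limit, and your added care about coupling the driving randomness is a refinement the paper leaves implicit.
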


\begin{proof}
Make a change of variable as follows
\begin{equation}
r_{2,l}=c(\rho_{l})=\frac{\rho_{l}}{\sqrt{\nu-2}}, \quad \mathrm{s.t.} \quad \nu\rightarrow+\infty.
\end{equation}

Substituting equation (55) in equation (51) results in
\begin{eqnarray}
&&I_{s}^{3}[\mathbf{g}]=\frac{1}{N}\sum\limits_{l=1}^{N}\left\{\left[1-\frac{n}{\rho_{l}^{2}}\right]\mathbf{g}(\mathbf{\mu})+\frac{1}{2\rho_{l}^{2}}\sum\limits_{i=1}^{n}\right. \nonumber\\
&&\left.\left[\mathbf{g}(\mathbf{\mu}-\rho_{l}\sqrt{\frac{\nu}{\nu-2}\mathbf{\Sigma}}\mathbf{Q}_{l}\mathbf{e}_{i})+\mathbf{g}(\mathbf{\mu}+\rho_{l}\sqrt{\frac{\nu}{\nu-2}\mathbf{\Sigma}}\mathbf{Q}_{l}\mathbf{e}_{i})\right]\right\}. \nonumber\\
\end{eqnarray}

Taking the limit of equation (56) when the dof parameter $\nu\rightarrow+\infty$, we can obtain equation (53).

Using the transformation theorem and equation (55), the PDF of random variable $\rho_{l}$ is given by
\begin{equation}
p(\rho_{l})=p_{r_{2}}(c(\rho_{l}))c'(\rho_{l}),
\end{equation}
where $p_{r_{2}}(\cdot)$ denotes the PDF of $r_{2}$ , and $c'(\rho_{l})$ denotes the derivative of $c(\rho_{l})$ with respect to $\rho_{l}$ given by
\begin{equation}
c'(\rho_{l})=\frac{1}{\sqrt{\nu-2}}.
\end{equation}

Substituting equations (28), (55) and (58) in equation (57), we obtain
\begin{eqnarray}
&&p(\rho_{l})=2\rho_{l}^{n+1}\lim_{\nu\rightarrow+\infty}\frac{1}{(\nu-2)^{\frac{n+2}{2}}\mathrm{B}(\frac{n+2}{2},\frac{\nu-2}{2})}\times \nonumber\\
&&\lim_{\nu\rightarrow+\infty}\left(1+\frac{\rho_{l}^{2}}{\nu-2}\right)^{-\frac{\nu+n}{2}}.
\end{eqnarray}

Utilizing the identity $\mathrm{B}(a,b)=\frac{\Gamma(a)\Gamma(b)}{\Gamma(a+b)}$, the first limit in equation (59) can be formulated as
\begin{eqnarray}
&&\lim_{\nu\rightarrow+\infty}\frac{1}{(\nu-2)^{\frac{n+2}{2}}\mathrm{B}(\frac{n+2}{2},\frac{\nu-2}{2})}=\frac{2^{-\frac{n+2}{2}}}{\Gamma(\frac{n+2}{2})}\times \nonumber\\
&&\lim_{\nu\rightarrow+\infty}\frac{\Gamma(\frac{\nu-2}{2}+\frac{n+2}{2})}{\Gamma(\frac{\nu-2}{2})(\frac{\nu-2}{2})^{\frac{n+2}{2}}}.
\end{eqnarray}

Using the property of Gamma function $\lim\limits_{t\rightarrow+\infty}\frac{\Gamma(t+\alpha)}{\Gamma(t)t^{\alpha}}=1$ in equation (60) gives
\begin{equation}
\lim_{\nu\rightarrow+\infty}\frac{1}{(\nu-2)^{\frac{n+2}{2}}\mathrm{B}(\frac{n+2}{2},\frac{\nu-2}{2})}=\frac{2^{-\frac{n+2}{2}}}{\Gamma(\frac{n+2}{2})}.
\end{equation}

The second limit in equation (59) can be reformulated as
\begin{eqnarray}
\lim_{\nu\rightarrow+\infty}\left(1+\frac{\rho_{l}^{2}}{\nu-2}\right)^{-\frac{\nu+n}{2}}&=&\lim\limits_{\nu\rightarrow+\infty}s(\nu)^{d(\nu)} \nonumber\\
&=&\lim_{\nu\rightarrow+\infty}s(\nu)^{\left[\lim\limits_{\nu\rightarrow+\infty}d(\nu)\right]},
\end{eqnarray}
where the functions $s(\nu)$ and $d(\nu)$ are given by
\begin{equation}
s(\nu)=\left(1+\frac{\rho_{l}^{2}}{\nu-2}\right)^{\frac{\nu-2}{\rho_{l}^{2}}}
\end{equation}
\begin{equation}
d(\nu)=-\frac{\rho_{l}^{2}(\nu+n)}{2(\nu-2)}.
\end{equation}

Using the identity $\lim\limits_{t\rightarrow+\infty}\left(1+\frac{1}{t}\right)^{t}=e$ and equations (63)-(64), equation (62) can be rewritten as
\begin{equation}
\lim_{\nu\rightarrow+\infty}\left(1+\frac{\rho_{l}^{2}}{\nu-2}\right)^{-\frac{\nu+n}{2}}=e^{-\frac{\rho_{l}^{2}}{2}}.
\end{equation}

Substituting equations (61) and (65) in (59), we can obtain (54), which proves the theorem.
\end{proof}

Considering that the Student's t PDF turns into the Gaussian PDF as the dof parameter $\nu\rightarrow+\infty$, we obtain
\begin{eqnarray}
\lim_{\nu\rightarrow+\infty}I[\mathbf{g}]&=&\int_{\mathbb{R}^{n}}\mathbf{g}(\mathbf{x})\lim_{\nu\rightarrow+\infty}\mathrm{St}(\mathbf{x};\mathbf{\mu},\mathbf{\Sigma},\nu)d\mathbf{x} \nonumber\\
&=&\int_{\mathbb{R}^{n}}\mathbf{g}(\mathbf{x})\mathrm{N}(\mathbf{x};\mathbf{\mu},\mathbf{\Sigma})d\mathbf{x}.
\end{eqnarray}

According to the Theorem 4 and equation (66), we can conclude that the proposed SSTSRCR with $\nu\rightarrow+\infty$ can be utilized to calculate the Gaussian weighted integral. Thus, the proposed SSTSRCR is a generalized SIR, which can calculate not only the Gaussian weighted integral but also the Student's t weighted integral.

\section{Simulation study}
In this simulation, the superior performance of the proposed RSTSCF as compared with existing filters is shown in the problem of manoeuvring bearing-only tracking observed in clutter. The target moves according to the continuous white noise acceleration motion model \cite{Dunik13}
\begin{equation}
\mathbf{x}_{k}=\mathbf{F}\mathbf{x}_{k-1}+\mathbf{G}\mathbf{w}_{k-1},
\end{equation}
where $\mathbf{x}_{k}=[x_{k}\;y_{k}\;\dot{x}_{k}\;\dot{y}_{k}]$, and $x_{k}$, $y_{k}$, $\dot{x}_{k}$ and $\dot{y}_{k}$ denote the cartesian coordinates and corresponding velocities respectively; $\mathbf{F}$ and $\mathbf{G}$ denote respectively the state transition matrix and noise matrix given by
\begin{equation}
\mathbf{F}=\left[\begin{array}{cc}
\mathbf{I}_{2} &\quad \Delta{t}\mathbf{I}_{2}\\
\mathbf{0} &\quad \mathbf{I}_{2}
\end{array}\right]\qquad
\mathbf{G}=\left[\begin{array}{cc}
\mathbf{\Gamma} &\quad \mathbf{0}_{2\times1}\\
\mathbf{0}_{2\times1} &\quad \mathbf{\Gamma}
\end{array}\right],
\end{equation}
where $\Delta{t}=1\mathrm{min}$ is the sampling interval, and $\mathbf{I}_{2}$ is the two dimensional identity matrix, and $\mathbf{0}_{2\times1}$ is the two dimensional zero vector, and $\mathbf{\Gamma}=[0.5\Delta{t}^{2} \;\; \Delta{t}]^{T}$.

The target is observed by an angle sensor installed in a manoeuvring platform. If the platform is located at $(x_{k}^{p},y_{k}^{p})$ at time $k$, then the measurement model is given by
\begin{equation}
z_{k}=\mathrm{tan}^{-1}(\frac{y_{k}-y_{k}^{p}}{x_{k}-x_{k}^{p}})+v_{k},
\end{equation}
where $z_{k}$ is the angle between the target and the platform at time $k$. Outlier corrupted process and measurement noises are generated according to \cite{student's t filter 2013,TAES 2017 RSTKF,TAES 2016}
\begin{equation}
\mathbf{w}_{k}\sim
\left\{\begin{array}{ll}
N(\mathbf{0},\mathbf{\Sigma}_{w}) \quad & \mathrm{w.p.} \:\: 0.95 \\
N(\mathbf{0},100\mathbf{\Sigma}_{w}) \quad & \mathrm{w.p.} \:\: 0.05
\end{array}\right.
\end{equation}
\begin{equation}
{v}_{k}\sim
\left\{\begin{array}{ll}
N({0},\Sigma_{v}) \quad & \mathrm{w.p.} \:\: 0.95 \\
N({0},50\Sigma_{v}) \quad & \mathrm{w.p.} \:\: 0.05
\end{array}\right.,
\end{equation}
where $\mathrm{w.p.}$ denotes ``with probability'', the nominal process noise covariance matrix is $\mathbf{\Sigma}_{w}=10^{-6}\mathbf{I}_{2}\mathrm{km}^{2}/\mathrm{min}^{2}$, and the nominal measurement noise variance is $\Sigma_{v}=(0.02\mathrm{rad})^{2}$. Process and measurement noises, which are generated according to equations (70)-(71), have heavier tails.

In our simulation scenario, the initial positions of the target and the platform are respectively $(3\mathrm{km},3\mathrm{km})$ and $(0\mathrm{km},0\mathrm{km})$. The target moves at a constant speed of 180 knots (1 knot is $1.852\mathrm{km}/\mathrm{h}$) with a course of $-135.4^{\circ}$. The platform moves at a constant speed of 50 knots with a initial course of $-80^{\circ}$, and the course reaches $146^{\circ}$ at time $k=15\mathrm{min}$ by executing a manoeuvre \cite{Dunik13}. The initial estimation error covariance matrix is $\bm{P}_{0|0}=\mathrm{diag}[16\mathrm{km}^{2}\,16\mathrm{km}^{2},4\mathrm{km}^{2}/\mathrm{min}^{2},$ $4\mathrm{km}^{2}/\mathrm{min}^{2}]$, and the initial state estimate $\bm{\hat{x}}_{0|0}$ is chosen randomly from $\mathrm{N}(\bm{x}_{0},\bm{P}_{0|0})$, where $\bm{x}_{0}$ denotes the initial true state.

In this simulation, the stochastic integration filter (SIF) \cite{Dunik13}, the ARUKF with free parameter $\kappa=0$ \cite{ARUKF 2014}, the MCCKF with kernel size $\sigma=5$ \cite{MCCKF 2016}, the RSTEF \cite{student's t filter 2013}, the 3rd-degree RSTUF with free parameter $\kappa=3-n$ \cite{TAES 2016}, the 3rd-degree RSTCF \cite{Fusion 2016}, the fifth-degree RSTUF \cite{Fusion 2016 Sarkka}, the robust Student's t based Monte Carlo filter (RSTMCF), the Gaussian sum-cubature Kalman filter (GSCKF) \cite{GSF 2007}, the PF \cite{Arulampalam 2002}, and the proposed RSTSCF are tested. In the RSTMCF, the Student's t weighted integral is calculated using the conventional Monte Carlo approach with $10000$ random samples. In the GSCKF, the process and measurement noises are modelled as $p(\mathbf{w}_{k})=\sum\limits_{i=1}^{5}\alpha_{i}\mathrm{N}\left(\mathbf{w}_{k};\mathbf{0}, \lambda_{i}\mathbf{\Sigma}_{w}\right)$ and $p(v_{k})=\sum\limits_{i=1}^{5}\alpha_{i}\mathrm{N}\left(v_{k};0, \lambda_{i}\Sigma_{v}\right)$, where the weights $\alpha_{1}=0.8$ and $\alpha_{2}=\alpha_{3}=\alpha_{4}=\alpha_{5}=0.05$, and the scale parameters $\lambda_{1}=1$, $\lambda_{2}=50$, $\lambda_{3}=100$, $\lambda_{4}=500$ and $\lambda_{5}=1000$. Moreover, to prevent the computational complexity of the GSCKF increasing exponentially as the time, the posterior distribution is approximated as a weighted sum of five Gaussian terms with the highest weights. In the PF, the process and measurement noises are modelled as Student's t distributions, and the number of particle is chosen as $10000$. In the existing RSTEF, 3rd-degree RSTUF, 3rd-degree RSTCF, fifth-degree RSTUF, RSTMCF, PF and the proposed RSTSCF, the dof parameters are all chosen as $\nu_{1}=\nu_{2}=\nu_{3}=5$ and the scale matrices are all set as $\mathbf{Q}_{k}=\mathbf{\Sigma}_{w}$ and ${R}_{k}={\Sigma}_{v}$. In the SIF and the proposed RSTSCF, the number of random samples is selected as $N=100$. The proposed filter and existing filters are coded with MATLAB and the simulations are run on a computer with Intel Core i7-3770 CPU at 3.40 GHz.

To compare the performances of the proposed filter and existing filters, the RMSEs and the averaged RMSEs (ARMSEs) of the position and velocity are chosen as performance metric. The RMSE and ARMSE in position are respectively defined as
\begin{equation}
\mathrm{RMSE}_{\mathrm{pos}}(\emph{k})=\sqrt{\frac{1}{M}\sum\limits_{\emph{s}=1}^{M}\left(\left(x_{k}^{\emph{s}}-\hat{x}_{k}^{\emph{s}}\right)^{2}+\left(y_{\emph{k}}^{\emph{s}}-\hat{y}_{\emph{k}}^{\emph{s}}\right)\right)^{2}}
\end{equation}
\begin{equation}
\mathrm{ARMSE}_{\mathrm{pos}}=\sqrt{\frac{1}{MT}\sum\limits_{k=1}^{T}\sum\limits_{\emph{s}=1}^{M}\left(\left(x_{\emph{k}}^{\emph{s}}-\hat{x}_{\emph{k}}^{\emph{s}}\right)^{2}+\left(y_{\emph{k}}^{\emph{s}}-\hat{y}_{\emph{k}}^{\emph{s}}\right)\right)^{2}},
\end{equation}
where $M=1000$ denotes the number of Monte Carlo runs, and $T=100\mathrm{min}$ denotes the simulation time, and $(x_{k}^{s},y_{k}^{s})$ and $(\hat{x}_{k}^{s},\hat{y}_{k}^{s})$ respectively denote the true and estimated positions at the $s$-th Monte Carlo run. Similar to the RMSE and ARMSE in position, we can also formulate the RMSE and ARMSE in velocity.
\begin{table}[!t]
\renewcommand{\arraystretch}{2.0}
\caption{ARMSEs and implementation times of the proposed filter and existing filters.}
\centering
\begin{tabular}{cccc}
\hline
Filters              & $\mathrm{ARMSE_{pos}}$ (km) & $\mathrm{ARMSE_{vel}}$ (km/min)   &  Time (ms)  \\
\hline
SIF                  &     22.85                   &    0.83                           &   45.46     \\
ARUKF                &     $2.57\times10^{25}$     &    $2.41\times10^{24}$            &  $0.41$     \\
MCCKF                &     26.86                   &    0.88                           &  0.13        \\
RSTEF                &     $3.51\times10^{6}$      &    $6.51\times10^{4}$             &  $0.09$      \\
3rd RSTUF            &     88.86                   &    2.43                           &  $0.53$      \\
3rd RSTCF            &     29.90                   &    1.12                           &  $0.50$       \\
5th RSTUF            &     16.29                   &    0.65                           &  $1.0$        \\
RSTMCF               &     53.65                   &    1.65                           &  193.7        \\
GSCKF                &     19.23                   &    0.58                           &  54.6         \\
PF                   &     7.89                    &    0.30                           &  773.0        \\
RSTSCF               &     12.08                   &    0.56                           &  48.2         \\
\hline
\end{tabular}
\end{table}
\begin{figure} [!t]
\centering
\includegraphics[height=2.5in,width=3.25in]{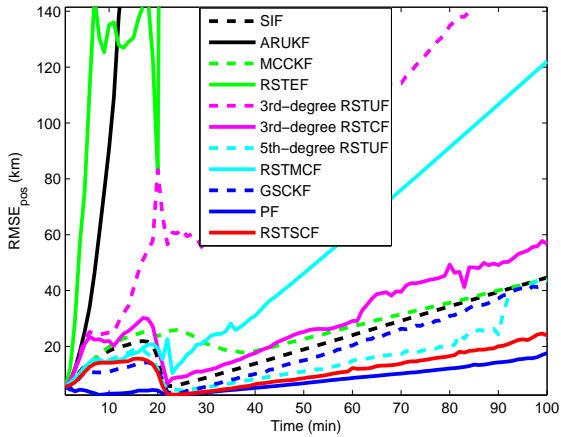}
\caption{RMSEs of the position from the proposed filter and existing filters}
\end{figure}
\begin{figure} [!t]
\centering
\includegraphics[height=2.5in,width=3.25in]{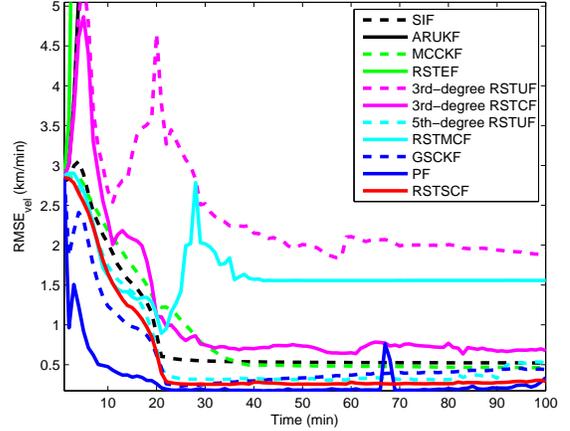}
\caption{RMSEs of the velocity from the proposed filter and existing filters}
\end{figure}

The RMSEs and ARMSEs of position and velocity from the proposed filter and existing filters are respectively shown in Fig. 1--Fig. 2 and Table III. The implementation times of the proposed filter and existing filters in single step run are given in Table III. Note that the existing ARUKF and RSTEF diverge in the simulation, as shown in Fig. 1--Fig. 2 and Table III.

It is seen from Fig. 1--Fig. 2 and Table III that the RMSEs and ARMSEs of the proposed RSTSCF are smaller than the existing SIF, ARUKF, MCCKF, RSTEF, 3rd-degree RSTUF, 3rd-degree RSTCF, 5th-degree RSTUF, RSTMCF, GSCKF but larger than the existing PF. Furthermore, it can be also seen from Table III that the implementation time of the proposed RSTSCF are greater than the existing SIF, ARUKF, MCCKF, RSTEF, 3rd-degree RSTUF, 3rd-degree RSTCF, 5th-degree RSTUF but significantly smaller than the existing RSTMCF, GSCKF and PF. Therefore, the proposed RSTSCF has better estimation accuracy than the existing SIF, ARUKF, MCCKF, RSTNFs and GSCKF, and is computationally much more efficient than the existing PF.

\section{Conclusion}
In this paper, a new SSTSRCR was derived based on the third-degree USR and the proposed third-degree URR, from which a new RSTSCF was obtained. The existing SIR is a special case of the proposed SSTSRCR when the dof parameter tends to infinity. Simulation results for a manoeuvring bearings-only tracking example illustrated that the proposed RSTSCF can achieve higher estimation accuracy than the existing GA filter, GSF, HNKF, MCCKF and RSTNFs, and is computationally much more efficient than the existing PF.

\end{document}